\numberwithin{equation}{section}
\def\co{\mathbb{C}}
\def\ze{\mathbb{Z}}
\newcommand{\Slash}[1]{{\ooalign{\hfil#1\hfil\crcr\raise.167ex\hbox{/}}}}
\newtheorem{thm}{Theorem}
\newtheorem{lem}[thm]{Lemma}
\theoremstyle{definition}
\theoremstyle{remark}
\newtheorem{rem}{Remark\!}
\begin{document}

\title[The $\ze_2$ Index of Disordered Topological Insulators]{The $\ze_2$ Index of Disordered Topological 
Insulators with Time Reversal Symmetry}
\author[H. Katsura]{Hosho Katsura}
\author[T. Koma]{Tohru Koma}
\address[Katsura]{
Department of Physics, Graduate School of Science, The University of Tokyo, Hongo, Bunkyo-ku, Tokyo 113-0033, JAPAN
}
\email{katsura@phys.s.u-tokyo.ac.jp}
\address[Koma]{Department of Physics, Gakushuin University, 
Mejiro, Toshima-ku, Tokyo 171-8588, JAPAN}
\email{tohru.koma@gakushuin.ac.jp}
\date{\today}

\begin{abstract}
We study disordered topological insulators with time reversal symmetry. 
Relying on the noncommutative index theorem which relates the Chern number 
to the projection onto the Fermi sea and the magnetic flux operator, 
we give a precise definition of the $\ze_2$ index 
which is a noncommutative analogue of the Atiyah-Singer $\ze_2$ index. 
We prove that the noncommutative $\ze_2$ index is robust 
against any time-reversal symmetric perturbation including disorder potentials 
as long as the spectral gap 
at the Fermi level does not close. 
\end{abstract}

\maketitle 

\section{Introduction}
\label{sec:intro}

In a seminal paper, Haldane \cite{Haldane} proposed a tight-binding model of spinless fermions 
on a honeycomb lattice. 
Interestingly, although the net magnetic flux  
through the unit cell of the lattice is vanishing, the model possesses a topological phase 
in which the energy bands carry nontrivial Chern numbers \cite{TKNN,Kohmoto}. 
Kane and Mele \cite{KaneMele} generalized the Haldane model to a spin-1/2 fermion model 
with spin-orbit coupling, so that the model is time-reversal invariant. Because of the time-reversal symmetry, 
the sum of Chern numbers for occupied bands is always vanishing. 
Therefore, at first glance, it appears that the model is topologically trivial. 
Kane and Mele, however, showed that the model exhibits a $\ze_2$ topological order 
which is characterized by the $\ze_2$ topological invariant, instead of the Chern number.  
Further, their study indicates that the $\ze_2$ topological invariant  
they constructed is related to the $\ze_2$ index which was introduced 
by Atiyah and Singer \cite{AtiyahSinger1,AtiyahSinger2}, in order to characterize topological properties 
of a manifold. 
The argument of Kane and Mele relies on the properties of the Bloch wavefunctions, 
and hence requires translational invariance. 
Therefore, one of the important issues is to find a mathematically rigorous and numerically accessible representation 
of the Kane-Mele $\ze_2$ index which 
is robust against perturbations such as disorder potentials.\footnote{
Following the Niu-Thouless-Wu argument \cite{NiuThoulessWu}, Kane and Mele proposed a generalization 
of the $\ze_2$ index to systems with disorder and interactions. 
However, the robustness and the $\ze_2$ quantization of such an index have not yet been proved rigorously.}

We should also remark that Schulz-Baldes \cite{SB,GSB,DNSB} pointed out that the $\ze_2$ index by 
Kane and Mele is equal to the $\ze_2$ index of the Fredholm operator which is written in terms of 
the projection onto the Fermi sea and the magnetic flux operator. 
As is well known, in the case of the standard quantum Hall system, the usual index of the corresponding 
Fredholm operator is equal to the Chern number. This relation is nothing but 
the noncommutative index theorem \cite{BVS,ASS,AG} which is a noncommutative analogue 
of the Atiyah-Singer index theorem. 
In general, such a Fredholm operator is noncompact and its spectrum contains the essential spectrum. 
Thus, it seems hard to prove the continuity of the eigenvalues of the noncompact Fredholm operator 
against generic perturbations such as disorder, 
although the continuity leads to the robustness of the $\ze_2$ index. 
 
In the present paper, instead of the Fredholm operator  \cite{SB,GSB}, 
we use a pair of projections which was introduced by \cite{ASS,ASS2}. 
More precisely, we give an operator-theoretic definition of the $\ze_2$ index 
by relying on the noncommutative index theorem \cite{ASS,ASS2,AG} 
which relates the Chern number to the index of the pair of the projections.  
One of the projections is the projection onto the Fermi sea and 
the other is its unitary deformation which physically implies 
the insertion of a magnetic flux into 
the system \cite{ASS,DNSB}. The advantage of our approach is that the difference between the two projections 
becomes a compact operator. 
Therefore, all of the spectrum are discrete with finite multiplicity except for zero. 
This makes the treatment of perturbations much easier than that for noncompact Fredholm operators. 

In conventional differential geometry \cite{AtiyahSinger1,AtiyahSinger2}, 
the $\ze_2$ index is defined by the analytical index of an operator 
such as an elliptic pseudo-differential operator on a manifold. 
A special property of the multiplicities in the spectrum of the operator yields 
the robustness of the $\ze_2$ index \cite{AtiyahSinger1}. 
On the other hand, in our approach based on noncommutative geometry \cite{Connes}, 
the corresponding $\ze_2$ index is defined by the analytical index of the above compact operator, say $A$, 
consisting of the projection onto the Fermi sea and the magnetic flux operator. 
Similarly to the Kramers degeneracy in the energy spectrum, the time-reversal symmetry leads to 
a special property of the degeneracy in the spectrum of $A$: Every eigenvalue $\lambda$ of $A$ 
shows an even degeneracy when $0<|\lambda|<1$.  
This property yields the robustness of the constructed $\ze_2$ index 
as in the case of the Atiyah-Singer $\ze_2$ index.

The present paper is organized as follows: In the next section, we describe the models we consider. In Sec. 3, we state and prove our main result (Theorem~\ref{thm:main}) about the noncommutative $\ze_2$ index.  
As an example, in Sec.~\ref{KMmodel}, we demonstrate that the $\ze_2$ index for the Kane-Mele model can be computed from the Chern number for the corresponding spinless Haldane model. 
Appendix~\ref{IndChern} is devoted to a short review of noncommutative geometry, in which the relation between the analytical index of an operator and the Chern number is given. 
In Appendix~\ref{ChernHaldane}, we present a calculation of the Chern number of the projection onto 
the Fermi sea for the Haldane model in the present setting. 
In Appendix~\ref{RelationZ2}, we demonstrate that the present $\ze_2$ index is equal to 
that by Kane and Mele for the translationally invariant system. 

\section{Models}
\label{sec:Model}

Consider a tight-binding model of spin-$1/2$ fermions on the square lattice $\ze^2$. 
We denote by $\uparrow$ and $\downarrow$ the spin-up and the spin-down indices, respectively. 
We assume that there are $r$ orbitals at each lattice site.  
For each site ${\bf n}\in\ze^2$, we denote by $\varphi_\alpha({\bf n},\mu)\in\co$ the amplitude of a wavefunction $\varphi\in\ell^2(\ze^2,\co^r\otimes\co^2)$, where $\alpha\in\{\uparrow,\downarrow \}$ and $\mu=1,2,...,r$ denote the spin and the orbital indices, respectively. 
The action of the Hamiltonian $H$ on the wavefunction $\varphi$ is written as 
\begin{equation}
\label{tightbindingHam}
(H\varphi)_\alpha({\bf n},\mu)=\sum_{{\bf m}\in\ze^2}\sum_{\nu=1}^r\sum_{\beta\in\{\uparrow,\downarrow \}}
t_{{\bf n},\mu;{\bf m},\nu}^{\alpha,\beta}\varphi_\beta({\bf m},\nu)
\end{equation}
for ${\bf n}\in\ze^2$, $\mu=1,2,\ldots, r$, and $\alpha\in\{\uparrow,\downarrow \}$, 
where $t_{{\bf n},\mu;{\bf m}\nu}^{\alpha,\beta}$ are the hoping integrals satisfying the hermitian condition,  
$$
\overline{t_{{\bf n},\mu;{\bf m},\nu}^{\alpha,\beta}}=t_{{\bf m},\nu;{\bf n},\mu}^{\beta,\alpha}.
$$  
Here, $\overline{\cdots}$ stands for the complex conjugate. 
In the following, we assume that the range of the hoping integrals is finite. 
The two-component spinor wavefunctions for the $\mu$-th orbital at the site ${\bf n}\in\ze^2$ is written in the form
$$
\varphi({\bf n},\mu)=\left(\begin{matrix} 
\varphi_\uparrow({\bf n},\mu) \\
\varphi_\downarrow({\bf n},\mu)
\end{matrix}
\right).
$$

Let us introduce the time reversal transformation $\Theta$ for wavefunctions as 
\begin{equation}
\label{timereversal}
\varphi^\Theta:=\Theta \varphi=U^\Theta\overline{\varphi}\quad \mbox{for \ } 
\varphi\in\ell^2(\ze^2,\co^r\otimes\co^2),  
\end{equation}
where $U^\Theta$ is a unitary transformation which can be written as
a product of local unitary transformations with a compact support, and 
has a finite period, i.e., invariant under a finite translation 
on the present lattice $\ze^2$.   
As usual, the complex 
conjugates for wavefunctions are defined by 
$$
(\overline{\varphi})_\alpha({\bf n},\mu)=\overline{\varphi_\alpha({\bf n},\mu)}
\quad\mbox{for \ } {\bf n}\in\ze^2, \ \, \mu=1,2,\ldots,r, \ \ \mbox{and}\ \ \alpha\in\{\uparrow,\downarrow \}.
$$
In what follows, we assume that the time-reversal transformation $\Theta$ satisfies
$$
\Theta^2\varphi=-\varphi\quad \mbox{for } \ \varphi\in\ell^2(\ze^2,\co^r\otimes\co^2), 
$$
and say that $\Theta$ is an odd time-reversal transformation.

Let $a$ be an operator on the Hilbert space $\ell^2(\ze^2,\co^r\otimes\co^2)$. 
We say that the operator $a$ is  odd time-reversal
symmetric if the following condition holds: 
\begin{equation}
\Theta(a\varphi)=a\varphi^\Theta\quad \mbox{for any} \ \varphi\in\ell^2(\ze^2,\co^r\otimes\co^2).
\end{equation} 

\medskip

\section{The $\ze_2$ Index}
\label{Sec:Z2Ind}

We assume that the Fermi level lies in a spectral gap of the present Hamiltonian $H$. 
We write $P_{\rm F}$ for the spectral projection on energies below the Fermi energy $E_{\rm F}$. 
We introduce a unitary transformation $U_{\bf a}$ 
for wavefunctions $\varphi\in\ell^2(\ze^2,\co^r\otimes\co^2)$ as 
$$
(U_{\bf a}\varphi)({\bf n},\mu)=U_{\bf a}({\bf n})\varphi({\bf n},\mu)\quad 
\mbox{for}\ \ {\bf n}\in\ze^2, \ \ \mbox{and}\ \ \mu=1,2,\ldots,r,  
$$
where 
\begin{equation}
\label{def:Ua}
U_{\bf a}({\bf u}):=\frac{u_1+iu_2-(a_1+ia_2)}{|u_1+iu_2-(a_1+ia_2)|}
\end{equation}
for ${\bf u}=(u_1,u_2)\in\ze^2$ and ${\bf a}=(a_1,a_2)\in(\ze^2)^\ast$. 
Here, $(\ze^2)^\ast$ denotes the dual lattice of $\ze^2$, i.e., $(\ze^2)^\ast:=\ze^2-(1/2,1/2)$. 

We write  
\begin{equation}
\label{def:A}
A=P_{\rm F}-U_{\bf a} P_{\rm F}U_{\bf a}^\ast
\end{equation}
for short. 
Clearly, this operator $A$ is the difference between two projections. 
As in \cite{ASS,ASS2,AG}, the operator $A^3$ is trace class,\footnote{For trace class operators, 
see, e.g., Chap.~VI.6 of the book \cite{RSI}.} 
and the relative index can be defined as 
\begin{equation}
\label{index}
{\rm Ind}(P_{\rm F},U_{\bf a} P_{\rm F}U_{\bf a}^\ast) 
:={\rm dim}\;{\rm ker}\;(A-1)
-{\rm dim}\;{\rm ker}\;(A+1)={\rm Tr}\; A^3,
\end{equation}
where ${\rm dim}~{\rm ker}~O$ stands for the dimension of the kernel of an operator $O$. 
This implies that the quantity ${\rm Tr}\; A^3$ is quantized to an integer. 
Actually, this is nothing but the Chern number. See Appendix~\ref{IndChern} for details. 

Consider a Hamiltonian $H$ of the form (\ref{tightbindingHam}) which is odd time-reversal symmetric.
We will show that the $\ze_2$ index for such a Hamiltonian $H$ can be defined by 
\begin{equation}
\label{ze2index}
{\rm Ind}_2(P_{\rm F},U_{\bf a} P_{\rm F}U_{\bf a}^\ast) 
:={\rm dim}\;{\rm ker}\;(P_{\rm F}-U_{\bf a} P_{\rm F}U_{\bf a}^\ast-1)\ {\rm modulo}\; 2.
\end{equation}
More precisely, we will show that the $\ze_2$ index constructed above is robust 
against any perturbation which has the same odd time-reversal symmetry 
as the unperturbed Hamiltonian whenever the Fermi level lies in the spectral gap of the total Hamiltonian.   

We write 
\begin{equation}
\label{def:B}
B=1-P_{\rm F}-U_{\bf a}P_{\rm F}U_{\bf a}^\ast
\end{equation}
for short. Then, as in \cite{ASS2}, the following two relations hold:  
\begin{equation}
\label{ABcommutator}
AB+BA=0\quad \mbox{and}\quad A^2+B^2=1.
\end{equation}
Here, the operator $A$ is given by (\ref{def:A}). 

We note that the spectrum of the operator $A$ of (\ref{def:A}) is 
discrete with finite multiplicity except for zero because $A$ is self-adjoint and 
$A^3$ is trace class.  

Following \cite{ASS2}, 
we use the standard supersymmetry argument 
to prove that many of the eigenvectors of the operator $A$ 
come in pairs related by the operator $B$. 
Let $\varphi$ be an eigenvector of $A$ with 
eigenvalue $\lambda\in(0,1)$, i.e.,
$A\varphi=\lambda \varphi$, with $\lambda \in (0,1)$. 
Then, the anticommutation relation of (\ref{ABcommutator}) yields  
$$
AB\varphi=-BA\varphi=-\lambda B\varphi.
$$
Further, the second relation of (\ref{ABcommutator}) yields 
$$
B^2\varphi=(1-A^2)\varphi=(1-\lambda^2)\varphi.
$$
These results imply 
that $B$ is an invertible map from an eigenvector with eigenvalue $\lambda\in(0,1)$ to that with $-\lambda$. 
Thus, there is a one-to-one correspondence between the states $\varphi$ and $B \varphi$, 
and their eigenvalues come in pairs $\pm \lambda$, provided that $0<|\lambda|<1$. 

Next, we show that the time-reversal transformation $\Theta$ 
plays a similar role as the operator $B$ in the above argument. 
Let $\varphi$ be an eigenvector of $A$ with a strictly positive eigenvalue, 
i.e., $A\varphi=\lambda \varphi$, with $\lambda >0$. 
{From} (\ref{def:A}), one has 
\begin{equation}
\label{eq:TRA}
\Theta(P_{\rm F}-U_{\bf a}P_{\rm F}U_{\bf a}^\ast)\varphi=\lambda \varphi^\Theta.
\end{equation}
Since the Hamiltonian $H$ is time-reversal symmetric, 
the spectral projection $P_{\rm F}$ is time-reversal symmetric, too. 
We assume that the unitary operator $U^\Theta$ 
in the time-reversal transformation $\Theta$ of (\ref{timereversal}) satisfies 
\begin{equation}
\label{assump:UTheta}
U^\Theta U_{\bf a}(U^\Theta)^\ast=U_{\bf a}. 
\end{equation}
Under these assumptions, Eq.~(\ref{eq:TRA}) can be rewritten as
$$
(P_{\rm F}-U_{\bf a}^\ast P_{\rm F}U_{\bf a})\varphi^\Theta=\lambda \varphi^\Theta.
$$
Further, the left-hand side of the above equation is written as
$$
U_{\bf a}^\ast(U_{\bf a}P_{\rm F}U_{\bf a}^\ast-P_{\rm F})U_{\bf a}\varphi^\Theta
=-U_{\bf a}^\ast AU_{\bf a}\varphi^\Theta. 
$$
Consequently, we have 
$$
A(U_{\bf a}\varphi^\Theta)=-\lambda U_{\bf a}\varphi^\Theta. 
$$
This implies that $U_{\bf a}\varphi^\Theta$ is an eigenvector of $A$ with eigenvalue $-\lambda$. Furthermore, 
the transformation is invertible. 

\begin{lem}
\label{lem:commute} 
Let $\varphi$ be an eigenvector of $A$ with eigenvalue $\lambda$ satisfying $0<|\lambda|<1$. 
Then, the maps, $B$ and $U_{\bf a}\Theta$, are commuting with each other when acting on the vector $\varphi$, i.e., 
\begin{equation}
\label{commutUThetaB}
U_{\bf a}(B\varphi)^\Theta =B(U_{\bf a}\varphi^\Theta). 
\end{equation}
\end{lem}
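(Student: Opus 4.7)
The plan is a direct algebraic computation: expand both sides using $B=1-P_{\rm F}-U_{\bf a}P_{\rm F}U_{\bf a}^\ast$ and push $\Theta$ past $P_{\rm F}$ and $U_{\bf a}$ via the known intertwining relations. No new analytic input is needed; the eigenvalue hypothesis $0<|\lambda|<1$ is the context in which the identity will be applied (in the supersymmetry-style pairing argument), but for the identity itself it plays no role.

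First I would record the three intertwining relations that make the computation go. (i) The time-reversal symmetry of $H$ implies the same for its spectral projection $P_{\rm F}$, i.e.\ $\Theta P_{\rm F}=P_{\rm F}\Theta$. (ii) Because $U_{\bf a}$ is multiplication by the unit complex function ${\bf n}\mapsto U_{\bf a}({\bf n})$ defined in (\ref{def:Ua}), ordinary complex conjugation exchanges $U_{\bf a}$ with $U_{\bf a}^\ast$. Combining this with assumption (\ref{assump:UTheta}), which says that $U^\Theta$ commutes with $U_{\bf a}$ (equivalently with $U_{\bf a}^\ast$), one obtains the antilinear commutation rules
\begin{equation*}
\Theta U_{\bf a}=U_{\bf a}^\ast\Theta,\qquad \Theta U_{\bf a}^\ast=U_{\bf a}\Theta.
\end{equation*}

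Next I would compute the left-hand side of (\ref{commutUThetaB}). Using $B=1-P_{\rm F}-U_{\bf a}P_{\rm F}U_{\bf a}^\ast$ and (i), (ii),
\begin{equation*}
\Theta B\varphi=\varphi^\Theta-P_{\rm F}\varphi^\Theta-\Theta U_{\bf a}P_{\rm F}U_{\bf a}^\ast\varphi
=\varphi^\Theta-P_{\rm F}\varphi^\Theta-U_{\bf a}^\ast P_{\rm F}U_{\bf a}\varphi^\Theta,
\end{equation*}
where the last term is handled by moving $\Theta$ successively past $U_{\bf a}$, then $P_{\rm F}$, then $U_{\bf a}^\ast$. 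Multiplying by $U_{\bf a}$ on the left gives
\begin{equation*}
U_{\bf a}(B\varphi)^\Theta=U_{\bf a}\varphi^\Theta-U_{\bf a}P_{\rm F}\varphi^\Theta-P_{\rm F}U_{\bf a}\varphi^\Theta.
\end{equation*}
On the other hand, expanding $B$ directly,
\begin{equation*}
B(U_{\bf a}\varphi^\Theta)=U_{\bf a}\varphi^\Theta-P_{\rm F}U_{\bf a}\varphi^\Theta-U_{\bf a}P_{\rm F}U_{\bf a}^\ast U_{\bf a}\varphi^\Theta
=U_{\bf a}\varphi^\Theta-P_{\rm F}U_{\bf a}\varphi^\Theta-U_{\bf a}P_{\rm F}\varphi^\Theta.
\end{equation*}
The two right-hand sides coincide, which is (\ref{commutUThetaB}).

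There is no genuine obstacle here; the only point that requires care is the antilinearity of $\Theta$, which forces $U_{\bf a}$ to turn into $U_{\bf a}^\ast$ (and vice versa) each time $\Theta$ is commuted past it. Keeping accurate bookkeeping of these sign/adjoint flips is what the proof really amounts to, and once assumption (\ref{assump:UTheta}) is used to ensure that $U^\Theta$ itself commutes with $U_{\bf a}$, the cancellation between the two expressions is immediate.
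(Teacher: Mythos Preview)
Your proof is correct and follows essentially the same route as the paper. Both arguments compute $\Theta(B\varphi)=(1-P_{\rm F}-U_{\bf a}^\ast P_{\rm F}U_{\bf a})\varphi^\Theta$ from the intertwining relations and then observe this equals $U_{\bf a}^\ast B U_{\bf a}\varphi^\Theta$; the paper records that factorization explicitly, whereas you multiply through by $U_{\bf a}$ term-by-term and match against a direct expansion of the right-hand side, but the content is identical. Your remark that the hypothesis $0<|\lambda|<1$ is not actually used in the identity is also accurate.
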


\begin{proof}
{From} the definition (\ref{def:B}) of the operator $B$ and the assumption (\ref{assump:UTheta}), one has 
\begin{align*}
\Theta(B\varphi)&=(1-P_{\rm F}-U_{\bf a}^\ast P_{\rm F}U_{\bf a})\varphi^\Theta \\
&=U_{\bf a}^\ast(1-U_{\bf a}P_{\rm F}U_{\bf a}^\ast-P_{\rm F})U_{\bf a}\varphi^\Theta
=U_{\bf a}^\ast B U_{\bf a}\varphi^\Theta.
\end{align*}
Therefore, we have (\ref{commutUThetaB}). 
\end{proof}

Clearly, two vectors $\varphi$ and $U_{\bf a}(B\varphi)^\Theta$ 
are eigenvectors of the operator $A$ with the same eigenvalue $\lambda$.  
If these two vectors are linearly independent of each other, then the corresponding sector which is 
spanned by the two vectors is a two-dimensional space. 

\begin{lem}
Let $\varphi$ be an eigenvector of $A$ with  
eigenvalue $\lambda$ satisfying $0<|\lambda|<1$. 
Then, 
\begin{equation}
\label{orthovarphiUBThetavarphi}
\langle\varphi,U_{\bf a}(B\varphi)^\Theta\rangle=0
\end{equation}
\end{lem}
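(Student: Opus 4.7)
The plan is to reduce (\ref{orthovarphiUBThetavarphi}) to a Kramers-type skew-symmetry argument, where the natural odd antiunitary is not $\Theta$ itself but the composite $T:=U_{\bf a}\Theta$. The first order of business is to show that $T$ is antiunitary and squares to $-\mathbb{I}$. Antiunitarity is immediate, as $T$ is the product of a unitary and an antiunitary. For the square, note that $U_{\bf a}$ acts by multiplication by the unit-modulus scalar (\ref{def:Ua}), so complex conjugation inverts it, i.e.\ $\overline{U_{\bf a}\psi}=U_{\bf a}^{\ast}\overline{\psi}$. Combining this with the hypothesis (\ref{assump:UTheta}) yields the intertwining $\Theta U_{\bf a}=U_{\bf a}^{\ast}\Theta$, hence
\[
T^{2}\;=\;U_{\bf a}\,\Theta\, U_{\bf a}\,\Theta\;=\;U_{\bf a} U_{\bf a}^{\ast}\,\Theta^{2}\;=\;-\mathbb{I}.
\]

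With $T$ in hand, rewrite the inner product as $\langle\varphi, U_{\bf a}(B\varphi)^{\Theta}\rangle=\langle\varphi, TB\varphi\rangle$. For any antiunitary $T$ with $T^{2}=-\mathbb{I}$, the sesquilinear form $(u,v)\mapsto\langle u, Tv\rangle$ is skew-symmetric: combining $\langle Tu,Tv\rangle=\overline{\langle u,v\rangle}$ with $T^{-1}=-T$ gives
\[
\langle u, Tv\rangle \;=\; -\langle v, Tu\rangle \qquad \text{for all } u,v.
\]
Applied with $u=\varphi$ and $v=B\varphi$, this identity produces $\langle\varphi, TB\varphi\rangle=-\langle B\varphi, T\varphi\rangle$. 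Since $B$ is self-adjoint, the right-hand side equals $-\langle\varphi, BT\varphi\rangle$, and Lemma~\ref{lem:commute} then replaces $BT\varphi$ by $TB\varphi$. The resulting self-referential identity $\langle\varphi, TB\varphi\rangle=-\langle\varphi, TB\varphi\rangle$ forces the inner product to vanish, which is exactly (\ref{orthovarphiUBThetavarphi}).

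No step presents a serious obstacle. The only non-routine ingredient is the intertwining $\Theta U_{\bf a}=U_{\bf a}^{\ast}\Theta$, which is precisely where assumption (\ref{assump:UTheta}) and the phase-multiplication structure of $U_{\bf a}$ enter. Everything else is a direct adaptation of the classical Kramers argument $\langle\varphi,\Theta\varphi\rangle=0$ for an odd antiunitary, with the flux-insertion unitary $U_{\bf a}$ absorbed into a new antiunitary $T$ and the spectral operator $B$ handled through its self-adjointness and the commutation already established in Lemma~\ref{lem:commute}.
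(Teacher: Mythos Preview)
Your proof is correct and follows essentially the same route as the paper's: both arguments use the antiunitarity and odd square of the time-reversal map to produce a sign flip, move $B$ across via self-adjointness, invoke Lemma~\ref{lem:commute} to commute $B$ past the antiunitary action, and conclude that the inner product equals its own negative. The only difference is packaging: you explicitly introduce the composite antiunitary $T=U_{\bf a}\Theta$, verify $T^{2}=-\mathbb{I}$, and phrase the computation as the Kramers skew-symmetry $\langle u,Tv\rangle=-\langle v,Tu\rangle$, whereas the paper carries out the equivalent manipulations directly with $\Theta$ and $U_{\bf a}$ without naming $T$.
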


\begin{proof}
For a vector $\psi\in\ell^2(\ze^2,\co^r\otimes\co^2)$, one has 
\begin{equation}
\label{ThetapsiThetavarphiinn}
\langle \Theta\psi,\Theta\varphi\rangle=\langle\overline{\psi},\overline{\varphi}\rangle
=\langle\varphi,\psi\rangle
\end{equation}
{from} the definition of the time-reversal transformation $\Theta$. We set 
$$
\psi=U_{\bf a}(B\varphi)^\Theta.
$$
Substituting this into the right-hand side of the above equation (\ref{ThetapsiThetavarphiinn}), 
we have 
\begin{equation}
\label{ThetapsiThetavarphiinn2}
\langle \Theta\psi,\Theta\varphi\rangle=\langle \varphi,U_{\bf a}(B\varphi)^\Theta\rangle.
\end{equation}
On the other hand, the above vector $\psi$ can be written as 
$$
\psi=\Theta(U_{\bf a}^\ast B\varphi), 
$$
in the same way as in the proof of Lemma~\ref{lem:commute}. 
Substituting this into the left-hand side of Eq.~(\ref{ThetapsiThetavarphiinn2}), we obtain
\begin{equation}
\label{ThetapsiThetavarphiinn3}
-\langle U_{\bf a}^\ast B\varphi, \Theta\varphi\rangle=\langle \varphi,U_{\bf a}(B\varphi)^\Theta\rangle,
\end{equation}
where we have used the fact that $\Theta^2\phi=-\phi$ for any vector $\phi$. 
Then, we find that the inner product in the left-hand side 
can be written as 
$$
\langle U_{\bf a}^\ast B\varphi, \Theta\varphi\rangle=\langle \varphi, B(U_{\bf a}\varphi^\Theta)\rangle
=\langle \varphi, U_{\bf a}(B\varphi)^\Theta\rangle,
$$
where we have used the relation (\ref{commutUThetaB}) in Lemma~\ref{lem:commute} to get the second equality. 
Because of the minus sign in the left-hand side of (\ref{ThetapsiThetavarphiinn3}), 
we arrive at the desired result (\ref{orthovarphiUBThetavarphi}).
\end{proof}

These two lemmas imply that the multiplicity of the eigenvalue $\lambda$ of the operator $A$ 
must be even when $\lambda$ satisfies $0<|\lambda|<1$. 
Therefore, if the eigenvalues of $A$ change continuously 
under a continuous variation of the parameters of the Hamiltonian $H$, 
then the parity of the multiplicity of the eigenvalue $\lambda=1$ must be invariant 
under the deformation of the Hamiltonian. 
Thus, we have two possibilities: (i) An even number of eigenvectors of $A$ are lifted from the sector spanned 
by the eigenvectors of $A$ with the eigenvalue $\lambda=1$; (ii) an even number of eigenvectors of $A$ 
with eigenvalue $\lambda \ne 1$ become degenerate with the eigenvectors of $A$ with the eigenvalue $\lambda=1$. 
In both cases, it is enough to prove the continuity of the eigenvalues of the operator $A$ 
under deformation of the Hamiltonian \cite{RSB,Koma2}, in order to establish our main result.  

Let us consider a perturbation $\delta H$ for the odd time-reversal symmetric Hamiltonian $H$. 
We assume that the perturbation $\delta H$ is odd time-reversal symmetric with respect to 
the same time-reversal transformation $\Theta$ as the unperturbed Hamiltonian. 
We also assume that the range of the hopping integrals in the perturbed Hamiltonian $\delta H$ is finite. 
Therefore, the norm $\Vert \delta H\Vert$ of the Hamiltonian $\delta H$ is finite. 
Clearly, the total Hamiltonian $H'=H+\delta H$ 
is odd time-reversal symmetric, too.   
We assume that the Fermi level of the Hamiltonian $H'$ still lies in the spectral gap of $H'$. 
Using a contour integral, the projection onto the Fermi sea for the Hamiltonian $H'$ is written 
\begin{equation}
\label{contourPF}
P_{\rm F}'=\frac{1}{2\pi i}\oint dz \frac{1}{z-H'},
\end{equation}
where the contour encloses all the spectrum of $H'$ below the Fermi level. 

Similarly to the operator $A$ of (\ref{def:A}), we introduce 
$$
A'=P_{\rm F}'-U_{\bf a}P_{\rm F}'U_{\bf a}^\ast.
$$
Then, the difference between $A$ and $A'$ is written as
$$
A'-A=(P_{\rm F}'-P_{\rm F})-U_{\bf a}(P_{\rm F}'-P_{\rm F})U_{\bf a}^\ast. 
$$
In order to evaluate this, it is enough to estimate $P_{\rm F}'-P_{\rm F}$.  
Using the contour integral, we have 
\begin{align*}
P_{\rm F}'-P_{\rm F}&=\frac{1}{2\pi i}\oint dz 
\left[\frac{1}{z-H'}-\frac{1}{z-H}\right]\\
&=\frac{1}{2\pi i}\oint dz 
\frac{1}{z-H'}\delta H\frac{1}{z-H}.
\end{align*}
Because of the assumption of the spectral gap, the norm of the right-hand side can be bounded by 
the norm of the perturbation $\delta H$ with some positive constant. 
Therefore, the difference $A'-A$ is bounded by 
the norm, too. In consequence, the operator $A$ is continuous with respect to 
the norm of the perturbation $\delta H$. 
Using the min-max principle,\footnote{For the min-max principle, see, e.g., Chap.~XIII.1 of the book of 
\cite{RSIV}.} 
we obtain the desired result that the nonzero eigenvalue $\lambda$ 
of the operator $A$ is continuous with respect to the norm of the perturbation $\delta H$. 

To summarize, we obtain: 

\begin{thm}
\label{thm:main}
Suppose that the Hamiltonian $H$ is odd time-reversal symmetric 
with respect to the transformation $\Theta$ of (\ref{timereversal}) whose unitary operator 
$U^\Theta$ satisfies the condition (\ref{assump:UTheta}) for the unitary operator $U_{\bf a}$ of 
the operator $A$ of (\ref{def:A}).  
We assume that the Fermi level lies in a spectral gap of the Hamiltonian $H$. 
Then, the $\ze_2$ index of (\ref{ze2index}) is continuous with respect to 
the norm of any perturbation if the perturbed Hamiltonian preserves the same 
time-reversal symmetry as the unperturbed Hamiltonian. 
In other words, as long as the spectral gap does not close, 
the $\ze_2$ index is robust against any perturbation which preserves 
the same time-reversal symmetry as the unperturbed Hamiltonian.  
\end{thm}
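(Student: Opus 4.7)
The plan is to interpolate between the unperturbed and perturbed Hamiltonians, establish norm continuity of the associated operator $A$ along the interpolation, and combine this with the two spectral-pairing lemmas proved above to conclude that the parity of $\dim\ker(A-1)$ cannot change.

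Concretely, I would set $H_s := H + s\,\delta H$ for $s\in[0,1]$, let $P_{\rm F}(s)$ denote the Fermi projection associated with $H_s$, and define $A(s) := P_{\rm F}(s) - U_{\bf a}P_{\rm F}(s)U_{\bf a}^\ast$. Writing $P_{\rm F}(s)$ via the Riesz contour integral (\ref{contourPF}) and applying the second resolvent identity, one obtains a bound of the form $\Vert P_{\rm F}(s)-P_{\rm F}(s')\Vert = O(|s-s'|)$, uniformly on $[0,1]$, where the implicit constant depends only on $\Vert\delta H\Vert$ and on the size of the Fermi gap (which is open all along the path by hypothesis). This is essentially the estimate already sketched in the paragraph preceding the theorem. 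Hence $s\mapsto A(s)$ is norm continuous. Since $A(s)^3$ is trace class, the nonzero spectrum of $A(s)$ is discrete with finite multiplicity, and the min-max principle gives continuity of each nonzero eigenvalue as a function of $s$.

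Next, because $H_s$ is odd time-reversal symmetric with respect to the same $\Theta$ for every $s$, the hypotheses of the two preceding lemmas apply at each $s$: every eigenvalue $\lambda(s)$ of $A(s)$ with $0<|\lambda(s)|<1$ has even multiplicity (via $\varphi\mapsto U_{\bf a}(B\varphi)^\Theta$), and the nonzero spectrum of $A(s)$ is symmetric under $\lambda\mapsto-\lambda$ with equal multiplicities (via $\varphi\mapsto B\varphi$). Combined with the continuity from the previous step, this means that $\dim\ker(A(s)-1)$ can only change when a group of nonzero eigenvalues either detaches from $1$ into $(0,1)$ or merges into $1$ from $(0,1)$; any such event is forced by the lemmas to move an even number of eigenvalues (counted with multiplicity). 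Hence $\dim\ker(A(s)-1)\bmod 2$ is constant in $s$, which is the claimed robustness of (\ref{ze2index}).

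The main obstacle is this final parity argument, because the even-multiplicity lemma applies strictly inside $(0,1)$ and not at the endpoint $\lambda=1$ itself. To handle this I would fix a small $\epsilon>0$, form the finite-rank spectral projection of $A(s)$ on the interval $[1-\epsilon,1]$, and observe that, for $s$ in any subinterval on which no eigenvalue of $A(s)$ crosses $1-\epsilon$, this spectral projection is norm continuous and therefore of constant rank; the eigenvalues that may cross the threshold $1-\epsilon$ do so in even-multiplicity bunches by the lemma. Splitting $[0,1]$ into finitely many such subintervals — possible because $A(s)^3$ is trace class and its spectrum can only accumulate at $0$ — and piecing together the local rank-constancies yields the global invariance of $\dim\ker(A(s)-1)\bmod 2$, completing the proof.
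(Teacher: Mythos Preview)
Your proposal is correct and follows essentially the same route as the paper: norm continuity of $A$ via the resolvent identity and min-max, combined with the even-multiplicity lemma for eigenvalues in $(0,1)$, forces the parity of $\dim\ker(A-1)$ to be invariant along the deformation. The explicit linear interpolation $H_s$ and the $\epsilon$-threshold argument in your last paragraph spell out details the paper leaves implicit (it simply asserts that continuity of the eigenvalues plus even multiplicity in $(0,1)$ suffices), but the underlying strategy is identical.
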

\bigskip

\begin{rem}
(i) When the Fermi level lies in the localization regime, the same statement is still valid. 
But, we need to prove the localization of the particles separately. 
More precisely, we need a decay estimate 
for the resolvent $(z-H)^{-1}$, $z\in\co$. See, e.g., \cite{RSB,Koma2} for the homotopy argument in such a case. 
See also \cite{PLB} which discusses the conditions under which
both the quantization and homotopy invariance of the Chern
number in four or higher dimensions hold in the presence of strong disorder. 
\smallskip

\noindent
(ii) Since the $\ze_2$ index is given by the dimension of the kernel of the operator as in (\ref{ze2index}), 
one might think that the $\ze_2$ index may be written in terms of an integral of some connection, 
similarly to the Chern number. In fact, 
the $\ze_2$ index which was defined by Kane and Mele 
can be written in terms of an integral of the same connection as that of the Chern number 
over one-half of the Brillouin zone 
for translationally invariant systems \cite{FuKane,MooreBalents,EssinMoore}. 
(See also related articles \cite{FuKane2,Roy,FukuiHatsugai,LeeRyu}.) 
The explicit computation of the $\ze_2$ index with the use of the integral formula is given 
in Appendix~\ref{RelationZ2}.  
\smallskip

\noindent
(iii)  As mentioned in Introduction, Schulz-Baldes \cite{SB,GSB} defined 
$\ze_2$-indices for general odd symmetric Fredholm operators 
and treated the $\ze_2$ index for the Kane-Mele model as an example. 
(See also \cite{FukuiFujiwara,DNSB} for related articles.)
But his approach is different from ours. In fact, he defined the $\ze_2$ index 
by the parity of ${\rm dim}\; {\rm ker}\; T$ for the Fredholm operator 
$T:=P_{\rm F}U_{\bf a}P_{\rm F}+(1-P_{\rm F})$. From Eq.~(3.2a) in the proof of Proposition~3.1 in \cite{ASS2}, 
the dimension of the kernel of $T$ coincides with ours as  
${\rm dim}\; {\rm ker}\; T={\rm dim}\; {\rm ker}\;(A-1)$ 
for the operator $A$ of (\ref{def:A}). 
On the one hand, the operator $T$ is noncompact and contains the essential spectrum.
On the other hand, the operator $A$ has only the discrete spectrum 
with finite multiplicity except for zero, because $A$ is compact 
which follows from the fact that $A^3$ is trace class. 
Therefore, a homotopy argument for $A$ is much easier to handle than that for $T$. 
It should be noted that 
the homotopy argument is indispensable for defining the $\ze_2$ index for disordered systems. 
\smallskip

\noindent
(iv)  Hastings and Loring \cite{HL1,HL2,LH} proposed an alternative noncommutative approach to 
the $\ze_2$ index of topological insulators.  
\end{rem}

\medskip

\section{The $\ze_2$ Index of the Kane-Mele Model}
\label{KMmodel}

As a nontrivial example, we demonstrate in this section 
that the $\ze_2$ index for the Kane-Mele model can be computed from the Chern number 
for the corresponding Haldane model with a specific flux pattern and vanishing staggered sublattice potential. 
Both Hamiltonians of the Kane-Mele and the Haldane models have a single spectral gap except at critical points. 
In what follows, we assume that the Fermi level lies in the spectral gap. 
We first note that the condition (\ref{assump:UTheta}) holds for the Kane-Mele model. 
Because of the continuity of the $\ze_2$ index, it is sufficient to 
treat the simplest case without either disorder or the Rashba term in the Hamiltonian. 
Namely, when continuously switching on those terms, 
the $\ze_2$ index does not change as long as the spectral gap does not close. 
As is well known, the Hamiltonian of the Kane-Mele model is decoupled into 
two independent Haldane models \cite{Haldane} when the Rashba term in the Hamiltonian vanishes. 
Although each of the Haldane model shows the nonvanishing Chern number, 
the total Chern number for the decoupled Haldane models is vanishing 
because of the time-reversal symmetry. 
This can be seen as follows. 
First note that the one-to-one mapping $U_{\bf a}\Theta$ sends an eigenvector of the operator $A$ 
with the eigenvalue $\lambda=1$ to that of $A$ with $\lambda=-1$. 
Then, since the difference between the multiplicities of $\lambda=1$ and $\lambda=-1$ gives 
the total Chern number, it follows that the total Chern number is vanishing. 

By contrast, the $\ze_2$ index is given by the parity of the multiplicity of $\lambda=1$.  
Therefore, from these observations, 
we reach a conclusion: 
When the Rashba coupling is 
weak so that the spectral gap does not vanish, 
the $\ze_2$ index for the Kane-Mele model is equal to the parity of the Chern number 
for either one of the decoupled Haldane models. 
Namely, we have 
\begin{equation}
{\rm Ind}_2(P_{\rm F},U_{\bf a} P_{\rm F}U_{\bf a}^\ast)=I_{\rm Ch}\ \ \mbox{modulo}\ 2,
\end{equation}
where we have written $I_{\rm Ch}$ for the Chern number for the single Haldane model.  
In Appendix~\ref{ChernHaldane}, we show that $I_{\rm Ch}=1$ for the single Haldane model in the present setting. 
Therefore, the $\ze_2$ index for the Kane-Mele model is equal to $1$.  

\medskip

\appendix

\section{The Index Theorem and the Chern Number}
\label{IndChern}

Using the method of noncommutative geometry \cite{Connes,BVS,ASS,AG}, 
one can show that the quantity, ${\rm Tr}\; A^3$, in the right-hand side of the index formula (\ref{index}) 
is equal to the Chern number. 
In this appendix, we do not require that the Hamiltonian $H$ is odd time-reversal symmetric. 

By using a function $\theta_{\bf a}({\bf u})$ which is the angle of sight from ${\bf a}$ to ${\bf u}$, 
the unitary operator $U_{\bf a}({\bf u})$ of (\ref{def:Ua}) can be written as
$U_{\bf a}({\bf u})=\exp[i\theta_{\bf a}({\bf u})]$.
We introduce an orthonormal complete system of functions,
$$
\left(\chi_{\bf u}^{\alpha,\mu}\right)_\beta({\bf n},\nu):=
\begin{cases} 1, & \text{${\bf n}={\bf u}$, $\beta=\alpha$ and $\nu=\mu$};\\
0, & \text{otherwise}.
\end{cases}
$$
Namely, the function $\chi_{\bf u}^{\alpha,\mu}$ has the nonvanishing value $1$ only for 
the spin $\alpha$ and the $\mu$-th orbit at the site ${\bf u}$. 
One has the matrix elements, 
$$
\langle\chi_{\bf u}^{\alpha,\mu},(P_{\rm F}-U_{\bf a}P_{\rm F}U_{\bf a}^\ast)\chi_{\bf v}^{\beta,\nu}\rangle 
=\tau_{{\bf u},{\bf v}}\mathcal{T}_{{\bf u},{\bf v}}^{\alpha,\mu;\beta,\nu}, 
$$
where
$$
\tau_{{\bf u},{\bf v}}=1-\exp[i\theta_{\bf a}({\bf u})-i\theta_{\bf a}({\bf v})]
$$
and 
\begin{equation}
\label{def:mathcalT}
\mathcal{T}_{{\bf u},{\bf v}}^{\alpha,\mu;\beta,\nu}
:=\langle\chi_{\bf u}^{\alpha,\mu},P_{\rm F}\chi_{\bf v}^{\beta,\nu}\rangle.
\end{equation}
Using the complete basis, the index of (\ref{index}) is written  
\begin{align*}
{\rm Ind}(P_{\rm F},U_{\bf a}P_{\rm F}U_{\bf a}^\ast)&={\rm Tr}\; (P_{\rm F}-U_{\bf a}P_{\rm F}U_{\bf a}^\ast)^3\\
&=\sum_{{\bf u},{\bf v},{\bf w}\in\ze^2}\tau_{{\bf u},{\bf v}}\tau_{{\bf v},{\bf w}}\tau_{{\bf w},{\bf u}}
\mathcal{S}_{{\bf u},{\bf v},{\bf w},{\bf u}},
\end{align*}
where 
\begin{equation}
\label{def:mathcalS}
\mathcal{S}_{{\bf u},{\bf v},{\bf w},{\bf u}}
=\sum_{\alpha,\beta,\gamma} \sum_{\mu,\nu,\xi}
\mathcal{T}_{{\bf u},{\bf v}}^{\alpha,\mu;\beta,\nu}
\mathcal{T}_{{\bf v},{\bf w}}^{\beta,\nu;\gamma,\xi}
\mathcal{T}_{{\bf w},{\bf u}}^{\gamma,\xi;\alpha,\mu}.
\end{equation} 
As shown in Proposition~3.8 in \cite{ASS}, the index ${\rm Ind}(P_{\rm F},U_{\bf a}P_{\rm F}U_{\bf a}^\ast)$ 
is independent of the positioning ${\bf a}$ of the flux tube. Relying on this fact, the index is written  
$$
{\rm Ind}(P_{\rm F},U_{\bf a}P_{\rm F}U_{\bf a}^\ast)
=\frac{1}{|\Lambda^\ast|}\sum_{{\bf a}\in\Lambda^\ast}
\sum_{{\bf u},{\bf v},{\bf w}\in\ze^2}\tau_{{\bf u},{\bf v}}\tau_{{\bf v},{\bf w}}\tau_{{\bf w},{\bf u}}
\mathcal{S}_{{\bf u},{\bf v},{\bf w},{\bf u}},
$$
where $\Lambda^\ast$ is a finite rectangular box which is a subset of the dual square lattice $(\ze^2)^\ast$. 
Following Elgart, Graf and Schenker \cite{EGS}, one has 

\begin{lem}
There exists a sequence of finite rectangular boxes $\Lambda\subset\ze^2$ such that   
$$
{\rm Ind}(P_{\rm F},U_{\bf a}P_{\rm F}U_{\bf a}^\ast)
=\lim_{\Lambda\uparrow\ze^2}\frac{1}{|\Lambda|}\sum_{{\bf u}\in\Lambda}
\sum_{{\bf v},{\bf w}\in\ze^2}\sum_{{\bf a}\in(\ze^2)^\ast}
\tau_{{\bf u},{\bf v}}\tau_{{\bf v},{\bf w}}\tau_{{\bf w},{\bf u}}
\mathcal{S}_{{\bf u},{\bf v},{\bf w},{\bf u}}.
$$
\end{lem}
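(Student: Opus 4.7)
The starting point is the identity displayed just above the lemma,
$$
|\Lambda^\ast|\cdot{\rm Ind}(P_{\rm F},U_{\bf a}P_{\rm F}U_{\bf a}^\ast)=\sum_{{\bf a}\in\Lambda^\ast}\sum_{{\bf u,v,w}\in\ze^2}\tau_{{\bf u,v}}\tau_{{\bf v,w}}\tau_{{\bf w,u}}\mathcal{S}_{{\bf u,v,w,u}},
$$
valid for every finite rectangle $\Lambda^\ast\subset(\ze^2)^\ast$, and the goal is to interchange the bookkeeping roles of ${\bf a}$ and ${\bf u}$. My plan relies on two decay inputs: (a) the exponential off-diagonal estimate $|\mathcal{T}_{{\bf u,v}}^{\alpha,\mu;\beta,\nu}|\le Ce^{-\gamma|{\bf u-v}|}$, which follows from the contour representation~(\ref{contourPF}) and the Combes-Thomas bound enabled by the spectral gap; and (b) the pointwise bound $|\tau_{{\bf u,v}}({\bf a})|\le\min(2,C|{\bf u-v}|/|{\bf a-u}|)$, obtained from $|1-e^{i\Delta\theta}|\le|\Delta\theta|$ together with $|\nabla_{\bf u}\theta_{\bf a}({\bf u})|=1/|{\bf u-a}|$. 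Combining (a) and (b) yields the key tail estimate
$$
\sum_{{\bf v,w}\in\ze^2}\bigabs{\tau_{{\bf u,v}}\tau_{{\bf v,w}}\tau_{{\bf w,u}}\mathcal{S}_{{\bf u,v,w,u}}}\le\frac{C}{(1+|{\bf a-u}|)^{3}},
$$
in which the exponential factors from (a) absorb the combinatorial growth $|{\bf u-v}||{\bf v-w}||{\bf w-u}|$ generated by (b).

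Given this tail bound, the quadruple sum over $({\bf a,u,v,w})$ converges absolutely as soon as either ${\bf a}$ or ${\bf u}$ is confined to a finite box, so Fubini is legitimate. First I would rewrite
$$
|\Lambda^\ast|\cdot{\rm Ind}=\sum_{{\bf u}\in\ze^2}\sum_{{\bf v,w}\in\ze^2}\sum_{{\bf a}\in\Lambda^\ast}\tau_{{\bf u,v}}\tau_{{\bf v,w}}\tau_{{\bf w,u}}\mathcal{S}_{{\bf u,v,w,u}},
$$
then decompose the outer sum as $\{{\bf u}\in\Lambda\}\cup\{{\bf u}\notin\Lambda\}$, where $\Lambda$ is the integer lattice rectangle associated to $\Lambda^\ast$ (they differ by the shift $(1/2,1/2)$), and on the piece ${\bf u}\in\Lambda$ I would extend the ${\bf a}$-sum from $\Lambda^\ast$ to all of $(\ze^2)^\ast$ by adding and subtracting the complementary tail. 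Dividing by $|\Lambda|=|\Lambda^\ast|$, one obtains the right-hand side of the lemma plus two purely boundary remainders
$$
R_1=\frac{1}{|\Lambda|}\sum_{{\bf u}\in\Lambda}\sum_{{\bf a}\notin\Lambda^\ast}\sum_{{\bf v,w}}\tau\tau\tau\mathcal{S},\qquad R_2=\frac{1}{|\Lambda|}\sum_{{\bf u}\notin\Lambda}\sum_{{\bf a}\in\Lambda^\ast}\sum_{{\bf v,w}}\tau\tau\tau\mathcal{S}.
$$

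Both remainders are controlled directly by the tail estimate. For fixed ${\bf u}\in\Lambda$, the sum $\sum_{{\bf a}\notin\Lambda^\ast}(1+|{\bf a-u}|)^{-3}$ is dominated by $C/\dist({\bf u},\partial\Lambda^\ast)$, and for a square $\Lambda$ of side $N$ one has $\sum_{{\bf u}\in\Lambda}1/\dist({\bf u},\partial\Lambda^\ast)=O(N\log N)$, which gives $R_1=O((\log N)/N)\to 0$; exchanging the order of summation yields the symmetric estimate for $R_2$. Taking any sequence of squares exhausting $\ze^2$ therefore produces the limit identity claimed in the lemma.

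The hard part will be the tail estimate itself: its cubic decay in $|{\bf a-u}|$ must simultaneously survive the growth factors $|{\bf u-v}||{\bf v-w}||{\bf w-u}|$ issuing from (b) and the transition to the trivial bound $|\tau|\le 2$ in the near-diagonal regime where ${\bf a}$ lies close to the cluster $\{{\bf u,v,w}\}$. I would split the ${\bf v,w}$-sum into a near subregion, in which $|{\bf u-v}|$ and $|{\bf u-w}|$ are at most a fixed fraction of $|{\bf a-u}|$ and all three instances of (b) deliver genuine $|{\bf a-u}|^{-1}$ gains, and a complementary far subregion, where the exponential decay from (a) degrades the trivial bound $|\tau|\le 2$ to a super-polynomial tail in $|{\bf a-u}|$. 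Once this decay estimate is in place, the Fubini rearrangement and the boundary bounds are routine.
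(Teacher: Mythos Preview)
The paper does not actually supply a proof of this lemma; it merely states the result and attributes it to Elgart, Graf and Schenker \cite{EGS}. Your proposal is a correct and essentially complete reconstruction of the standard argument from that reference: the Combes--Thomas decay of $P_{\rm F}$, the geometric bound $|\tau_{\mathbf{u},\mathbf{v}}|\le C|\mathbf{u}-\mathbf{v}|/|\mathbf{a}-\mathbf{u}|$ valid in the near regime, the resulting $(1+|\mathbf{a}-\mathbf{u}|)^{-3}$ tail (summable over $\mathbf{a}\in(\ze^2)^\ast$ in two dimensions), and the boundary-remainder estimate $O((\log N)/N)$ are exactly the ingredients that make the interchange of the $\mathbf{a}$- and $\mathbf{u}$-averages go through. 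One small remark: your bound~(b) should more carefully read $|\tau_{\mathbf{u},\mathbf{v}}|\le C|\mathbf{u}-\mathbf{v}|/\min(|\mathbf{a}-\mathbf{u}|,|\mathbf{a}-\mathbf{v}|)$, but in the near region $|\mathbf{u}-\mathbf{v}|\le\tfrac12|\mathbf{a}-\mathbf{u}|$ this reduces to what you wrote, and in the far region the exponential decay takes over anyway, so the final estimate is unaffected.
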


Further, in order to compute the right-hand side of the index, 
we apply the Connes' area formula \cite{ConnesArea,ASS,AG}, 
\begin{align*}
\sum_{{\bf a}\in(\ze^2)^\ast}
\tau_{{\bf u},{\bf v}}\tau_{{\bf v},{\bf w}}\tau_{{\bf w},{\bf u}}
&=2\pi i ({\bf v}-{\bf w})\times({\bf w}-{\bf u})\\
&=2\pi i[(v_2-w_2)(w_1-u_1)-(v_1-w_1)(w_2-u_2)]. 
\end{align*}
In consequence, the index is written 
$$
{\rm Ind}(P_{\rm F},U_{\bf a}P_{\rm F}U_{\bf a}^\ast)
=\lim_{\Lambda\uparrow\ze^2}\frac{2\pi i}{|\Lambda|}\sum_{{\bf u}\in\Lambda}
\sum_{{\bf v},{\bf w}\in\ze^2}
({\bf v}-{\bf w})\times({\bf w}-{\bf u})
\mathcal{S}_{{\bf u},{\bf v},{\bf w},{\bf u}}.
$$
Note that 
\begin{align*}
&(v_2-w_2)(w_1-u_1)\langle\chi_{\bf u}^{\alpha,\mu},P_{\rm F}\chi_{\bf v}^{\beta,\nu}\rangle
\langle\chi_{\bf v}^{\beta,\nu},P_{\rm F}\chi_{\bf w}^{\gamma,\xi}\rangle
\langle\chi_{\bf w}^{\gamma,\xi},P_{\rm F}\chi_{\bf u}^{\alpha,\mu}\rangle\\
&=\langle\chi_{\bf u}^{\alpha,\mu},P_{\rm F}\chi_{\bf v}^{\beta,\nu}\rangle
\langle\chi_{\bf v}^{\beta,\nu},[X_2,P_{\rm F}]\chi_{\bf w}^{\gamma,\xi}\rangle
\langle\chi_{\bf w}^{\gamma,\xi},[X_1,P_{\rm F}]\chi_{\bf u}^{\alpha,\mu}\rangle
\end{align*}
and 
\begin{align*}
&(v_1-w_1)(w_2-u_2)\langle\chi_{\bf u}^{\alpha,\mu},P_{\rm F}\chi_{\bf v}^{\beta,\nu}\rangle
\langle\chi_{\bf v}^{\beta,\nu},P_{\rm F}\chi_{\bf w}^{\gamma,\xi}\rangle
\langle\chi_{\bf w}^{\gamma,\xi},P_{\rm F}\chi_{\bf u}^{\alpha,\mu}\rangle\\
&=\langle\chi_{\bf u}^{\alpha,\mu},P_{\rm F}\chi_{\bf v}^{\beta,\nu}\rangle
\langle\chi_{\bf v}^{\beta,\nu},[X_1,P_{\rm F}]\chi_{\bf w}^{\gamma,\xi}\rangle
\langle\chi_{\bf w}^{\gamma,\xi},[X_2,P_{\rm F}]\chi_{\bf u}^{\alpha,\mu}\rangle, 
\end{align*}
where the operators $X_j$ are given by 
$(X_j\varphi)({\bf n})=n_j\varphi({\bf n})$, $j=1,2$, 
for $\varphi({\bf n})\in\co^r\otimes\co^2$. Combining these, (\ref{def:mathcalT}) and (\ref{def:mathcalS}), 
we obtain 
$$
{\rm Ind}(P_{\rm F},U_{\bf a}P_{\rm F}U_{\bf a}^\ast)
=-2\pi i\lim_{\Lambda\uparrow\ze^2}\frac{1}{|\Lambda|}{\rm Tr}\;\chi_\Lambda
P_{\rm F}[[X_1,P_{\rm F}],[X_2,P_{\rm F}]]\chi_\Lambda,
$$
where $\chi_\Lambda$ is the characteristic function of the finite lattice $\Lambda$, i.e., 
$$
\chi_{\Lambda}({\bf n})=
\begin{cases} 1, & \text{${\bf n}\in{\Lambda}$};\\
0, & \text{otherwise}.
\end{cases}
$$
Further, the index can be expressed as \cite{ASS,EGS,Koma2} 
$$
{\rm Ind}(P_{\rm F},U_{\bf a}P_{\rm F}U_{\bf a}^\ast)
=-2\pi i{\rm Tr}\;P_{\rm F}[[\vartheta_{1,{\bf a}},P_{\rm F}],[\vartheta_{2,{\bf a}},P_{\rm F}]]
$$
in terms of the step functions,  
$$
\vartheta_{j,{\bf a}}({\bf n})=
\begin{cases} 1, & \text{$n_j-a_j\ge 0$};\\
0, & \text{$n_j-a_j<0$,}
\end{cases}
$$
for $j=1,2$.
Using a contour integral as in (\ref{contourPF}), we have 
\begin{align*}
[\vartheta_{j,{\bf a}},P_{\rm F}]&=\frac{1}{2\pi i}\oint dz
\left(\vartheta_{j,{\bf a}}\frac{1}{z-H}-\frac{1}{z-H}\vartheta_{j,{\bf a}}\right)\\
&=-\frac{1}{2\pi i}\oint dz
\frac{1}{z-H}[H,\vartheta_{j,{\bf a}}]\frac{1}{z-H}\\
&=\frac{i}{2\pi i}\oint dz \frac{1}{z-H}J_{j,{\bf a}}\frac{1}{z-H},
\end{align*}
where we have introduced \cite{Koma5} the local current operators 
$J_{j,{\bf a}}=i[H,\vartheta_{j,{\bf a}}]$ 
for $j=1,2$.
Substituting this into the expression of the index, one has 
\begin{multline}
{\rm Ind}(P_{\rm F},U_{\bf a}P_{\rm F}U_{\bf a}^\ast)\\
=\frac{1}{2\pi i}\oint dz\oint dz'\;{\rm Tr}\; P_{\rm F}
\frac{1}{z-H}J_{1,{\bf a}}\frac{1}{z-H}\frac{1}{z'-H}J_{2,{\bf a}}\frac{1}{z'-H}
-(1\leftrightarrow 2).
\end{multline}
Consider the system on the finite lattices $\Lambda$. 
When there exists a uniform spectral gap above the Fermi level with respect to the size of 
the finite lattices $\Lambda$, the index is written as
\begin{multline}
\label{ChernNumberFinite}
{\rm Ind}(P_{\rm F},U_{\bf a}P_{\rm F}U_{\bf a}^\ast)\\
={2\pi i}\lim_{\Lambda\uparrow \ze^2}\sum_{i:E_i< E_{\rm F}}\sum_{j:E_j>E_{\rm F}}
\frac{1}{(E_i-E_j)^2}
\langle \Phi_i,J_{1,{\bf a}}\Phi_j\rangle \langle \Phi_j,J_{2,{\bf a}}\Phi_i\rangle
-(1\leftrightarrow 2)
\end{multline}
in terms of the eigenvectors $\Phi_i$ of the finite-volume Hamiltonian $H$ 
with the energy eigenvalue $E_i$. Here, $E_{\rm F}$ is the Fermi energy. 
The right-hand side expresses the Hall conductance which is quantized to an integer because 
the analytical index 
${\rm Ind}(P_{\rm F},U_{\bf a}P_{\rm F}U_{\bf a}^\ast)$ of the operator 
in the left-hand side takes an integer value by definition. 

\medskip

\section{The Chern Number of the Haldane Model}
\label{ChernHaldane}

In this Appendix, we present a detailed exposition of the calculation of the Chern number for the Haldane model. 
The original Haldane model is defined on a honeycomb lattice. 
For technical reasons, however, we need to define it on the square lattice $\ze^2$ 
in the present setting.\footnote{
Note that an arbitrary two-dimensional tight-binding model can be mapped onto the model 
on $\ze^2$ with suitably chosen hopping integrals. 
} 
For this purpose, we introduce two orbitals labeled by $a$ and $b$ at each lattice site ${\bf n}\in\ze^2$. 
We write 
$$
\varphi({\bf n})=\left(\begin{matrix} \varphi^a({\bf n}) \\ 
\varphi^b({\bf n})
\end{matrix}\right)\in \co^2
$$
for the wavefunction $\varphi\in \ell^2(\ze^2,\co^2)$ at the site ${\bf n}=(n_1,n_2)$. 
In order to prove the statement in Sec. \ref{KMmodel}, it suffices to consider the Haldane model 
with a pure-imaginary hopping term between second-neighbor sites and the vanishing staggered sublattice potential. 
In this case, the Schr\"odinger equation reads 
\begin{multline*}
t[\varphi^b(n_1-1,n_2)+\varphi^b(n_1,n_2)+\varphi^b(n_1,n_2+1)]\\
+it'[\varphi^a(n_1-1,n_2-1)-\varphi^a(n_1,n_2-1)+\varphi^a(n_1+1,n_2)-\varphi^a(n_1+1,n_2+1)\\
+\varphi^a(n_1,n_2+1)-\varphi^a(n_1-1,n_2)]=E\varphi^a(n_1,n_2)
\end{multline*}
and 
\begin{multline*}
t[\varphi^a(n_1,n_2)+\varphi^a(n_1,n_2-1)+\varphi^a(n_1+1,n_2)]\\
+it'[\varphi^b(n_1-1,n_2)-\varphi^b(n_1-1,n_2-1)+\varphi^b(n_1,n_2-1)-\varphi^b(n_1+1,n_2)\\
+\varphi^b(n_1+1,n_2+1)-\varphi^b(n_1,n_2+1)]=E\varphi^b(n_1,n_2),
\end{multline*}
where the hopping integrals, $t$ and $t'$, are real constants, and $E$ is the energy eigenvalue.  
In the following, we will treat only the case with $t'>0$. 

In order to use the formula (\ref{ChernNumberFinite}) for computing the Chern number, 
we consider a finite $L_1\times L_2$ rectangular box with the periodic boundary conditions in $\ze^2$. 
We set 
$$
\varphi({\bf n})=\left(\begin{matrix} \varphi^a({\bf n}) \\ 
\varphi^b({\bf n})
\end{matrix}\right)=\frac{1}{\sqrt{L_1L_2}}\exp[i{\bf k}\cdot{\bf n}]
\left(\begin{matrix} \tilde{\varphi}^a({\bf k}) \\ 
\tilde{\varphi}^b({\bf k})
\end{matrix}\right),
$$
where ${\bf k}=(k_1,k_2)$ is the wave number vector, and $\tilde{\varphi}^a$ and $\tilde{\varphi}^b$ 
are functions of ${\bf k}$. 
Substituting this into the above Schr\"odinger equation, one has 
$$
H({\bf k})\left(\begin{matrix} \tilde{\varphi}^a({\bf k}) \\ 
\tilde{\varphi}^b({\bf k})
\end{matrix}\right)=E\left(\begin{matrix} \tilde{\varphi}^a({\bf k}) \\ 
\tilde{\varphi}^b({\bf k})
\end{matrix}\right),
$$
where 
$$
H({\bf k})=
\left(\begin{matrix}
-\Delta({\bf k}) & \overline{\Gamma({\bf k})} \\
\Gamma({\bf k}) & \Delta({\bf k})
\end{matrix}\right)
$$
with 
$
\Delta({\bf k})=2t'[\sin k_1+\sin k_2-\sin(k_1+k_2)]
$
{and}
\begin{equation}
\label{def:Gamma} 
\Gamma({\bf k})=t(1+e^{ik_1}+e^{-ik_2}).
\end{equation}
The energy eigenvalues are then given by 
$
E=E_\pm({\bf k})=\pm \mathcal{E}({\bf k}) 
$
with 
$
\mathcal{E}({\bf k})=\sqrt{\Delta({\bf k})^2+|\Gamma({\bf k})|^2}.
$

Let us consider the case of $\Gamma({\bf k})=0$. 
In the Brillouin zone, $(-\pi,\pi]\times(-\pi,\pi]$, 
there are only two wave-number vectors that satisfy this condition. 
They are given by 
$
(k_1,k_2)=\pm (2\pi/3,2\pi/3). 
$
For these two points, one has 
$
\Delta({\bf k})=\pm 3\sqrt{3}t' 
$
for the positive and negative wave numbers, respectively, because we have assumed $t'>0$. 
Thus, there exists a nonvanishing spectral gap between the upper and lower energy bands 
for $t\ne 0$ and $t'\ne 0$.  

The eigenvector of the lower band is given by \cite{MurakamiNagaosa}
$$
f_-({\bf k})=\frac{1}{\sqrt{2\mathcal{E}({\bf k})[\mathcal{E}({\bf k})+\Delta({\bf k})]}}
\left(\begin{matrix}
\mathcal{E}({\bf k})+\Delta({\bf k}) \\ -\Gamma({\bf k})
\end{matrix}\right).
$$
But, the normalization factor of this vector becomes infinite at $(k_1,k_2)=-(2\pi/3,2\pi/3)$ 
because $\Delta({\bf k})=-\mathcal{E}({\bf k})$ at this point. 
An alternative expression of the eigenvector is given by \cite{MurakamiNagaosa}
$$
g_-({\bf k})=\frac{1}{\sqrt{2\mathcal{E}({\bf k})[\mathcal{E}({\bf k})-\Delta({\bf k})]}}
\left(\begin{matrix}
-\overline{\Gamma({\bf k})} \\ \mathcal{E}({\bf k})-\Delta({\bf k}) \end{matrix}\right).
$$
In this case, the normalization factor of this vector becomes infinite at $(k_1,k_2)=(2\pi/3,2\pi/3)$. 
The relation between these two vectors is given by the gauge transformation, 
$f_-({\bf k})=e^{i\eta({\bf k})}g_-({\bf k})$, for $(k_1,k_2)\ne\pm (2\pi/3,2\pi/3)$, 
where the angle function $\eta({\bf k})$ is given by  
\begin{equation}
\label{def:eta}
e^{i\eta({\bf k})}=-\frac{\Gamma({\bf k})}{|\Gamma({\bf k})|}.
\end{equation}
We choose the eigenvector as 
$$
\tilde{\varphi}_-^{\rm up}({\bf k})=\left(\begin{matrix} \tilde{\varphi}_-^a({\bf k}) \\ 
\tilde{\varphi}_-^b({\bf k})
\end{matrix}\right)
=f_-({\bf k})\quad \mbox{for \ } (k_1,k_2)\in \mathcal{U}^{\rm up}:=(-\pi,\pi]\times[0,\pi]
$$
and 
$$
\tilde{\varphi}_-^{\rm low}({\bf k})=\left(\begin{matrix} \tilde{\varphi}_-^a({\bf k}) \\ 
\tilde{\varphi}_-^b({\bf k})
\end{matrix}\right)
=g_-({\bf k})\quad \mbox{for \ } (k_1,k_2)\in \mathcal{U}^{\rm low}:=(-\pi,\pi]\times[-\pi,0]. 
$$
Using these eigenvectors and the translational invariance, the Chern number $I_{\rm Ch}$ 
which is given by the right-hand side of (\ref{ChernNumberFinite}) is written as
\begin{equation}
\label{IndIntuplow}
I_{\rm Ch}=\frac{i}{2\pi}\left(K^{\rm up}+K^{\rm low}\right)
\end{equation}
with 
\begin{multline}
\label{Iup}
K^{\rm up}=\int_{\mathcal{U}^{\rm up}}dk_1dk_2
\frac{\langle\tilde{\varphi}_-^{\rm up}({\bf k}),J_1({\bf k})\tilde{\varphi}_+^{\rm up}({\bf k})\rangle
\langle\tilde{\varphi}_+^{\rm up}({\bf k}),J_2({\bf k})\tilde{\varphi}_-^{\rm up}({\bf k})\rangle}
{[E_+({\bf k})-E_-({\bf k})]^2}\\-(1\leftrightarrow 2)
\end{multline}
and 
\begin{multline}
K^{\rm low}=\int_{\mathcal{U}^{\rm low}}dk_1dk_2
\frac{\langle\tilde{\varphi}_-^{\rm low}({\bf k}),J_1({\bf k})\tilde{\varphi}_+^{\rm low}({\bf k})\rangle
\langle\tilde{\varphi}_+^{\rm low}({\bf k}),J_2({\bf k})\tilde{\varphi}_-^{\rm low}({\bf k})\rangle}
{[E_+({\bf k})-E_-({\bf k})]^2}\\-(1\leftrightarrow 2),
\end{multline}
where $\tilde{\varphi}_+^{\rm up}({\bf k})$ and $\tilde{\varphi}_+^{\rm low}({\bf k})$ 
are the corresponding eigenvectors for the upper band, and the current operators are given by 
$$
J_j({\bf k})=\frac{\partial}{\partial k_j}H({\bf k}),\quad j=1,2. 
$$
By differentiating 
$$
\langle\tilde{\varphi}_-^{\rm up}({\bf k}),H({\bf k})\tilde{\varphi}_+^{\rm up}({\bf k})\rangle=0,  
$$
one has 
\begin{multline*}
\langle\frac{\partial}{\partial k_j}
\tilde{\varphi}_-^{\rm up}({\bf k}),H({\bf k})\tilde{\varphi}_+^{\rm up}({\bf k})\rangle
+\langle\tilde{\varphi}_-^{\rm up}({\bf k}),\frac{\partial}{\partial k_j}H({\bf k})
\tilde{\varphi}_+^{\rm up}({\bf k})\rangle\\
+\langle\tilde{\varphi}_-^{\rm up}({\bf k}),H({\bf k})
\frac{\partial}{\partial k_j}\tilde{\varphi}_+^{\rm up}({\bf k})\rangle=0.  
\end{multline*}
{From} this, the definition of the current $J_j({\bf k})$ and the identity, 
\begin{equation}
\label{diffnormID}
\langle\frac{\partial}{\partial k_j}
\tilde{\varphi}_-^{\rm up}({\bf k}),\tilde{\varphi}_+^{\rm up}({\bf k})\rangle
+\langle\tilde{\varphi}_-^{\rm up}({\bf k}),
\frac{\partial}{\partial k_j}\tilde{\varphi}_+^{\rm up}({\bf k})\rangle=0,
\end{equation}
one obtains 
$$
\langle\tilde{\varphi}_-^{\rm up}({\bf k}),J_j({\bf k})
\tilde{\varphi}_+^{\rm up}({\bf k})\rangle=
[E_-({\bf k})-E_+({\bf k})]\langle\frac{\partial}{\partial k_j}
\tilde{\varphi}_-^{\rm up}({\bf k}),\tilde{\varphi}_+^{\rm up}({\bf k})\rangle.
$$
In the same way, 
$$
\langle\tilde{\varphi}_+^{\rm up}({\bf k}),J_j({\bf k})
\tilde{\varphi}_-^{\rm up}({\bf k})\rangle=
[E_-({\bf k})-E_+({\bf k})]\langle
\tilde{\varphi}_+^{\rm up}({\bf k}),\frac{\partial}{\partial k_j}
\tilde{\varphi}_-^{\rm up}({\bf k})\rangle.
$$
Substituting these into the integral $K^{\rm up}$ of (\ref{Iup}) 
and using the identities (\ref{diffnormID}) and
$$
\langle\frac{\partial}{\partial k_j}
\tilde{\varphi}_-^{\rm up}({\bf k}),\tilde{\varphi}_-^{\rm up}({\bf k})\rangle
+\langle\tilde{\varphi}_-^{\rm up}({\bf k}),
\frac{\partial}{\partial k_j}\tilde{\varphi}_-^{\rm up}({\bf k})\rangle=0,
$$
one has 
\begin{align}
\label{IupInt}
K^{\rm up}&=\int_{\mathcal{U}^{\rm up}}dk_1dk_2 
\langle \frac{\partial}{\partial k_1}\tilde{\varphi}_-^{\rm up}({\bf k}),
\tilde{\varphi}_+^{\rm up}({\bf k})\rangle 
\langle \tilde{\varphi}_+^{\rm up}({\bf k}),\frac{\partial}{\partial k_2}
\tilde{\varphi}_-^{\rm up}({\bf k})\rangle-(1\leftrightarrow 2)\\ \nonumber
&=\int_{\mathcal{U}^{\rm up}}dk_1dk_2 
\left[\langle \frac{\partial}{\partial k_1}\tilde{\varphi}_-^{\rm up}({\bf k}),
\frac{\partial}{\partial k_2}
\tilde{\varphi}_-^{\rm up}({\bf k})\rangle
-\langle \frac{\partial}{\partial k_2}\tilde{\varphi}_-^{\rm up}({\bf k}),
\frac{\partial}{\partial k_1}
\tilde{\varphi}_-^{\rm up}({\bf k})\rangle\right]\\ \nonumber
&=\int_{\mathcal{U}^{\rm up}}dk_1dk_2 
\left[\frac{\partial}{\partial k_1}\langle \tilde{\varphi}_-^{\rm up}({\bf k}),
\frac{\partial}{\partial k_2}
\tilde{\varphi}_-^{\rm up}({\bf k})\rangle
-\frac{\partial}{\partial k_2}\langle \tilde{\varphi}_-^{\rm up}({\bf k}),
\frac{\partial}{\partial k_1}
\tilde{\varphi}_-^{\rm up}({\bf k})\rangle\right]\\ \nonumber 
&=\int_{-\pi}^\pi dk_1\left[
\langle \tilde{\varphi}_-^{\rm up}(k_1,0),
\frac{\partial}{\partial k_1}
\tilde{\varphi}_-^{\rm up}(k_1,0)\rangle
-\langle \tilde{\varphi}_-^{\rm up}(k_1,\pi),
\frac{\partial}{\partial k_1}
\tilde{\varphi}_-^{\rm up}(k_1,\pi)\rangle\right].
\end{align}
In the same way, 
\begin{align}
\label{IlowLint}
K^{\rm low}&=-\int_{-\pi}^\pi dk_1\langle \tilde{\varphi}_-^{\rm low}(k_1,0),
\frac{\partial}{\partial k_1}
\tilde{\varphi}_-^{\rm low}(k_1,0)\rangle
\\ \nonumber
&+\int_{-\pi}^\pi dk_1\langle \tilde{\varphi}_-^{\rm low}(k_1,-\pi),
\frac{\partial}{\partial k_1}
\tilde{\varphi}_-^{\rm low}(k_1,-\pi)\rangle.
\end{align}

On the other hand, one has 
$$
\frac{\partial}{\partial k_1}\tilde{\varphi}_-^{\rm up}({\bf k})
=ie^{i\eta({\bf k})}\left[\frac{\partial\eta({\bf k})}{\partial k_1}\right]\tilde{\varphi}_-^{\rm low}({\bf k})
+e^{i\eta({\bf k})}\frac{\partial}{\partial k_1}\tilde{\varphi}_-^{\rm low}({\bf k}).
$$
from the gauge transformation, 
$\tilde{\varphi}_-^{\rm up}({\bf k})=e^{i\eta({\bf k})}\tilde{\varphi}_-^{\rm low}({\bf k})$.
Combining this with (\ref{IupInt}) and (\ref{IlowLint}), one obtains 
\begin{align*}
K^{\rm up}+K^{\rm low}&=\int_{-\pi}^\pi dk_1\frac{\partial}{\partial k_1}\eta(k_1,0)
-\int_{-\pi}^\pi dk_1\frac{\partial}{\partial k_1}\eta(k_1,\pi)=-2\pi i,
\end{align*}
where we have used (\ref{def:Gamma}) and (\ref{def:eta}). 
Substituting this into equation (\ref{IndIntuplow}), the value of the Chern number is calculated as  
$I_{\rm Ch}=1$ for the present Haldane model. 

\section{Relation to the $\ze_2$ index by Kane and Mele}
\label{RelationZ2}

In order to show that the $\ze_2$ index by Kane and Mele is equal to 
the present index (\ref{ze2index}), we recall the integral formula \cite{FuKane} for the $\ze_2$ index. 
We use the formula given by Eq.~(9) in \cite{EssinMoore}, i.e.,   
\begin{equation}
\label{intformula}
D:=\frac{1}{2\pi}\left[\oint_{\partial({\rm EBZ})} d\bm{k}\cdot\bm{\mathcal{A}} 
-\int_{\rm EBZ} dk_1dk_2\; \mathcal{F}\right],
\end{equation}
where $\bm{\mathcal{A}}$ and $\mathcal{F}$ are, respectively, the Berry connection and 
the field strength, and the ``effective Brillouin zone", EBZ, stands for one-half of the Brillouin zone 
with a time-reversal invariant frame.   
Because of the homotopy argument, it suffices to consider the case where 
the Kane-Mele model is decoupled into two independent Haldane models.  

Consider first one of the Haldane models whose lower band carries the Chern number $+1$. 
Following the argument in Appendix~\ref{ChernHaldane}, 
we set ${\rm EBZ}=\mathcal{U}^{\rm up}$. 
Further, we choose 
$$
\bm{\mathcal{A}}_g({\bf k})=-i\langle g_-({\bf k}),\nabla_{\bf k}g_-({\bf k})\rangle
$$
as the connection $\bm{\mathcal{A}}$ in the line integral in (\ref{intformula}). 
For the connection of the field strength $\mathcal{F}$, we choose 
$$
\bm{\mathcal{A}}_f({\bf k})=-i\langle f_-({\bf k}),\nabla_{\bf k}f_-({\bf k})\rangle. 
$$
Here, we stress that the wavefunction $f_-({\bf k})$ has no singularity in $\mathcal{U}^{\rm up}$, 
while the wavefunction $g_-({\bf k})$ has the single singular point in $\mathcal{U}^{\rm up}$.    
Then, the contribution to the index $D$ is written as 
\begin{align*}
D_+&:=\frac{1}{2\pi}\left[\oint_{\partial\mathcal{U}^{\rm up}} d\bm{k}\cdot\bm{\mathcal{A}}_g
-\oint_{\partial\mathcal{U}^{\rm up}} d\bm{k}\cdot\bm{\mathcal{A}}_f\right]\\
&=-\frac{1}{2\pi}\oint_{\partial\mathcal{U}^{\rm up}} d\bm{k}\cdot\nabla_{\bf k}\eta({\bf k})=1
\end{align*}
in the same way as in Appendix~\ref{ChernHaldane}. 

Next consider the other Haldane model which is the time reverse of the above one 
and therefore whose lower band carries the Chern number $-1$.  
The connection $\bm{\mathcal{A}}({\bf k})$ in the line integral can be made nonsingular 
and chosen to be the same as that 
for the field strength.\footnote{Restricting the region to one-half of the Brillouin zone is 
essential to this argument, because one cannot find a nonsingular connection on the whole Brillouin zone.}
It then follows that the corresponding contribution $D_-$ to the index $D$ is vanishing.  

In consequence, the index $D$ of (\ref{intformula}) is equal to $+1$. 
Immediately, one notices that this index $D$ is nothing but the index of the singular point of 
the wavefunction $g_-({\bf k})$. 
The importance of such singularities of wavefunctions on the Brillouin zone has already been 
pointed out by Kohmoto in his early work \cite{Kohmoto} . 
Furthermore, one can easily see that this index $D$ is equal to the present $\ze_2$ index of (\ref{ze2index}) 
by recalling the argument of Appendix~\ref{ChernHaldane}.

\bigskip\bigskip

\noindent
\thanks{\textbf{Acknowledgement:} 
HK was supported in part by JSPS Grants-in-Aid for Scientific Research No. 23740298 and 25400407. 
\bigskip\bigskip



\end{document}